\newtheorem{theorem}{Theorem}
\newtheorem{lemma}[theorem]{Lemma}
\begin{document} \bibliographystyle{alpha}
\title{Complex Semidefinite Programming and {\sc Max-$k$-Cut}}

\author{\textsc{Alantha
    Newman}\thanks{CNRS and Universit\'e Grenoble Alpes.  
Supported in part by LabEx
    PERSYVAL-Lab (ANR-11-LABX-0025).}}

\maketitle

\def\mkc{{\sc Max}-$k$-{\sc Cut}}
\def\mkcs{{\sc Max}-$k$-{\sc Cut} }

\begin{abstract}
In a second seminal paper on the application of semidefinite
programming to graph partitioning problems, Goemans and Williamson
showed how to formulate and round a {\em complex semidefinite program}
to give what is to date still the best-known approximation guarantee
of .836008 for {\sc Max-$3$-Cut}.  (This approximation
ratio was also achieved independently by De Klerk et
al.)  Goemans and Williamson left
open the problem of how to apply their techniques to {\sc Max-$k$-Cut}
for general $k$.  They point out that it does not seem straightforward
or even possible to formulate a good quality complex semidefinite
program for the general {\sc Max-$k$-Cut} problem, which presents a
barrier for the further application of their techniques.

We present a simple rounding algorithm for the standard semidefinite
programmming relaxation of {\sc Max-$k$-Cut} and show that it is
equivalent to the rounding of Goemans and Williamson in the case of
{\sc Max-$3$-Cut}.  This allows us to transfer the elegant analysis of
Goemans and Williamson for {\sc Max-3-Cut} to {\sc Max-$k$-Cut}.  For
$k \geq 4$, the resulting approximation ratios are about $.01$ worse
than the best known guarantees.  Finally, we present a generalization
of our rounding algorithm and conjecture (based on computational
observations) that it matches the best-known guarantees of De Klerk et
al.
\end{abstract}

\section{Introduction}

In the {\sc Max-$k$-Cut} problem, we are given an undirected graph,
$G=(V,E)$, with non-negative edge weights.  Our objective is to divide the
vertices into at most $k$ disjoint sets, for some given positive
integer $k$, so as to maximize the weight of the edges whose endpoints
lie in different sets.  When $k=2$, this problem is known simply as
the {\sc Max-Cut} problem.  The approximation guarantee of $1-1/k$
can be achieved for all $k$ by placing each vertex uniformly at
random in one of $k$ sets.  For all values of $k \geq 2$, this
simple algorithm yielded the best-known approximation ratio until
1994.  In that year, Goemans and Williamson gave a
.87856-approximation algorithm for the {\sc Max-Cut} problem based on semidefinite
programming (SDP), thereby introducing this method as a successful new
technique for designing approximation algorithms~\cite{GW}.  

Frieze and Jerrum subsequently developed an algorithm for the {\sc
  Max-$k$-Cut} problem that can be viewed as a generalization of
Goemans and Williamson's algorithm for {\sc Max-Cut} in the sense that
it is same algorithm when $k=2$~\cite{FJ}.  Although the rounding
algorithm of Frieze and Jerrum is arguably simple and natural, the
analysis is quite involved.  Their approximation ratios improved upon
the previously best-known guarantees of $1-1/k$ for $k \geq 3$ and are
shown in Table \ref{tbl:chart}.  A few years later, Andersson,
Engebretsen and H\aa stad also used semidefinite programming to design
an algorithm for the more general problem of {\sc Max-E2-Lin mod $k$},
in which the input is a set of equations or inequations mod $k$ on two
variables (e.g., $x-y \equiv c ~\bmod k$) and the objective is to
assign an integer from the range $[0,k-1]$ to each variable so that
the maximum number of equations are satisfied~\cite{AEH}.  They proved
that the approximation guarantee of their algorithm is at least $f(k)$
more than that of the simple randomized algorithm, where $f(k)$ is a
(small) linear function of $k$.  In the special case of {\sc
  Max-$k$-Cut}, they showed that the performance ratio of their
algorithm is no better than that of Frieze and Jerrum.  Although they
did not show the equivalence of these two algorithms, they stated that
numerical evidence suggested that the two algorithms have the same
approximation ratio.  Shortly thereafter, De Klerk, Pasechnik and
Warners presented an algorithm for {\sc Max-$k$-Cut} with improved
approximation guarantees for all $k\geq3$, shown in Table
\ref{tbl:chart}.  Additionally, they showed that their algorithm has
the same worst-case performance guarantee as that of Frieze and
Jerrum~\cite{DBLP:journals/jco/KlerkPW04}.

Around the same time, Goemans and Williamson independently presented
another algorithm for {\sc Max-3-Cut} based on {\it complex
  semidefinite programming} (CSDP)~\cite{GW2}.  For this problem, they
improved the best-known approximation guarantee of $.832718$ due to
Frieze and Jerrum to $.836008$, the same approximation ratio proven by
De Klerk, Pasechnik and Warners.  Goemans and Williamson showed that
their algorithm is equivalent to that of Andersson, Engebretsen and
H\aa stad and to that of Frieze and Jerrum (and therefore to that of
De Klerk, Pasechnik and Warners) in the case of {\sc
  Max-3-Cut}~\cite{GW2}.  However, they argued that their decision to
use complex semidefinite programming and, specifically, their choice
to represent each vertex by a single complex vector resulted in
``cleaner models, algorithms, and analysis than the equivalent models
using standard semidefinite programming.''

One issue noted by Goemans and Williamson with respect to their elegant
new model was that it is not clear how to apply their techniques to
{\sc Max-$k$-Cut} for $k \geq 4$.  Their approach
seemed to be tailored specifically to the {\sc Max-3-Cut} problem.  This is
because one cannot model, say, the {\sc Max-$4$-Cut} problem directly
using a complex semidefinite program.  This limitation is discussed in Section 8
of \cite{GW2}.  In fact, as they point out, a direct attempt to model
{\sc Max-$k$-Cut} with a complex semidefinite program would only
result in a $(1-1/k)$-approximation for $k\geq 4$.
De Klerk et al. also state that there is no obvious way
to extend the approach based on CSDP to {\sc Max-$k$-Cut} for $k >
3$.  (See page 269 in \cite{DBLP:journals/jco/KlerkPW04}.)

\subsection{Our Contribution}

In this paper, we make the following contributions.

\begin{enumerate}

\item {We present a simple rounding algorithm based on the standard
  semidefinite programming relaxation of {\sc Max-$k$-Cut} and show that
  it can be analyzed using the tools from \cite{GW2}.  

\begin{itemize}

\item For $k=3$, this results in an implementation of the Goemans-Williamson
  algorithm that avoids complex semidefinite
  programming.

\item For $k \geq 4$,
  the resulting approximation ratios are slightly worse than the
  best-known guarantees.

\end{itemize}
}

\item We present a simple generalization of this rounding algorithm and
  conjecture that it yields the best-known approximation ratios.

\end{enumerate}

Thus, the main contribution of this paper is to show that, despite its limited
modeling power, we can still apply the tools from complex semidefinite
programming developed by Goemans and Williamson to {\sc Max-$k$-Cut}.
In fact, we obtain the following worst-case approximation guarantee
for the {\sc Max-$k$-Cut} problem for all $k$, which is the same bound
they achieve for $k=3$:
\begin{eqnarray}
\phi_k & = & \frac{k-1}{k} 
+ \frac{k}{4\pi^2} \left[\arccos^2\left(\left(\frac{1}{k-1}\right)\cos\left( \frac{2\pi}{k} \right) \right)
 - \arccos^2\left(\frac{1}{k-1}\right) \right].
\end{eqnarray}
We note that for $k \geq 4$, the approximation ratio $\phi_k$ is about
$.01$ worse than the approximation ratio proved by Frieze and Jerrum.
See Table \ref{tbl:chart} for a comparison.  
However, given the
technical difficulty of Frieze and Jerrum's analysis, we believe that
it is beneficial to present an alternative algorithm and analysis that
yields a similiar approximation guarantee.  Moreover, we wish to take
a closer look at the techniques used by Goemans and Williamson for
{\sc Max-3-Cut} since these tools have not been widely applied in the
area of approximation algorithms, in sharp contrast to the tools used
to solve the {\sc Max-Cut} problem.  In fact, we are aware of only two
papers that use the main tools of \cite{GW2}: The first is for a
generalization of the {\sc Max-3-Cut}
problem~\cite{ling2009approximation} and the second is for an
optimization problem in which the variables are to be assigned complex
vectors~\cite{zhang2006complex}.

While Goemans and Williamsons' framework of complex semidefinite
programming does result in an elegant formulation and analysis for
{\sc Max-3-Cut}, it also to some extent obscures the geometric
structure that is apparent when one views the same algorithm from the
viewpoint of standard semidefinite programming.  Specifically, in the
latter framework, their complex semidefinite program is equivalent to
modeling each vertex with a 2-dimensional circle or disc of vectors.
In our opinion, their main technical contribution is a formula for the
exact distribution of the difference of the angles resulting when a
normal vector is projected onto two of these discs that are correlated
in a particular way. (See Lemma 8 in \cite{GW2}.)  Thus, while the
limitation in modeling {\sc Max-$k$-Cut} with complex semidefinite
programming comes from the fact that we cannot model the general
problem with these 2-dimensional discs, we can circumvent this barrier
in the following way.  We construct 2-dimensional discs using the
vectors obtained from a solution to the standard semidefinite program.
We then show that a pair of these 2-dimensional discs (i.e., one disc
for each vertex) are correlated in the same way as those produced in
the case of {\sc Max-3-Cut}.  Then we can apply and analyze the same
algorithm used for {\sc Max-3-Cut}.

In some cases (e.g., {\sc Max-3-Cut}), using the
distribution of the angle between two elements is stronger than using
the expected angle, which is what is used for {\sc Max-Cut}.  It
therefore seems that this tool has unexplored potential applications
for other optimization problems, for which it may also be possible to
overcome the modeling limitations of complex semidefinite programming
in a similiar manner as we do here.  On a high level, the idea of
constructing the ``complex'' vectors from a solution to a standard
semidefinite program was used for a circular arrangement
problem~\cite{DBLP:conf/innovations/MakarychevN11}.

Finally, we remark that the approach used in Section \ref{sec:mkcut}
to create a disc from a vector is reminiscent of Zwick's method of
outward rotations in which he combines hyperplane rounding and
independent random assignment~\cite{Zwick}.  For each unit vector
$v_i$ from an SDP solution, he computes a disc in the plane spanned by
$v_i$ and $u_i$, where the $u_i$'s form a set of pairwise orthogonal
vectors that are also orthogonal to the $v_i$'s, and chooses a new
vector from this disc based on a predetermined angle.  Thus, the goal
is to rotate each vector $v_i$ to obtain a new set of unit vectors, which are
then given as input to a now standard rounding algorithm, such as
random-hyperplane rounding.  In contrast, our goal is to use the actual disc in
the rounding, as done originally by Goemans and Williamson in the case
of {\sc Max-3-Cut}.

\begin{table}
\begin{center}
\fbox{\parbox{11cm}{
\begin{tabular}
{c|c|c|c|c|c}
$k$ & \cite{GW} & \cite{FJ} & \cite{GW2} & \cite{DBLP:journals/jco/KlerkPW04} & This paper\\
$k=2$ & .878956 & - & - & - & - \\
$k=3$ & -  & .832718 & .836008 & .836008 & -\\
$k=4$ & - & .850304 & - & .857487 & .846478\\
$k=5$ & - & .874243 & - & .876610 & .862440\\
$k=10$ & - & .926642 & - & .926788 & .915885
\end{tabular}}}
\caption{Approximation guarantees for {\sc Max-$k$-Cut}.}\label{tbl:chart}
\end{center}
\end{table}

\subsection{Organization}

We give some background on the (standard) semidefinite programming
relaxation used by Frieze and Jerrum and discuss their algorithm for
{\sc Max-$k$-Cut} in Section \ref{sec:FJ}.  In Section \ref{sec:GW},
we present Goemans and Williamson's algorithm for {\sc Max-3-Cut} from
the viewpoint of standard semidefinite programming.  In Section
\ref{sec:mkcut}, we show how to create a 2-dimensional disc for each
vertex given a solution to the standard semidefinite program for {\sc
  Max-$k$-Cut}.  We do not wish to formally prove the relationship
between these discs and the complex vectors.  Thus, in Section
\ref{sec:analysis}, we simply prove that if two discs are correlated
in a specified way, then the distribution of the angle is equivalent
to a distribution already computed exactly by Goemans and Williamson
in \cite{GW2}.  We can then easily prove that the 2-dimensional discs
we create for the vertices have the required pairwise correlation.
This results in a closed form approximation ratio for general $k$,
Theorem \ref{thm:main}.

\section{Frieze and Jerrum's Algorithm}\label{sec:FJ}

Consider
the following integer program for {\mkc}: 
\begin{eqnarray*}
& \max & \sum_{ij \in E} (1 - v_i \cdot v_j) \frac{k-1}{k} \\
v_i \cdot v_i & = & 1, \quad \forall i \in V,\\
v_i & \in & \Sigma_k, \quad \forall i \in V. \hspace{20mm} (P)
\end{eqnarray*}
Here, $\Sigma_k$ are the vertices of the equilateral simplex, where
each vertex is represented by a $k$-dimensional vector, and each pair
of vectors corresponding to a pair of vertices has dot product $-1/(k-1)$.  If
we relax the dimension of the vectors, we obtain the following
semidefinite relaxation, where $n = |V|$:
\begin{eqnarray*} 
& \max & \sum_{ij \in E} (1 - v_i \cdot v_j) \frac{k-1}{k}\nonumber \\
v_i \cdot v_i & = & 1, \quad \forall i \in V,\nonumber\\
v_i\cdot v_j & \geq & -\frac{1}{k-1}, \quad \forall i, j \in V,\nonumber\\
v_i & \in & \mathbb{R}^n, \quad \forall i \in V. \hspace{20mm} (Q)
\end{eqnarray*}
Frieze and Jerrum used this semidefinite relaxation to obtain an
algorithm for the {\sc Max}-$k$-{\sc Cut} problem~\cite{FJ}.
Specifically, they proposed the following rounding algorithm: Choose
$k$ random vectors, $g_1, g_2, \dots, g_k \in \mathbb{R}^n$, with each entry of each
vector chosen from the normal distribution ${\cal{N}}(0,1)$.  For
each vertex $i\in V$, consider the $k$ dot products of vector $v_i$
with each of the $k$ random vectors, $v_i\cdot g_1, v_i \cdot g_2,
\dots, v_i \cdot g_k$.  One of these dot products is maximum.  Assign
the vertex the label of the random vector with which it has the
maximum dot product.  In other words, if $v_i \cdot g_h = \max_{\ell=1}^k \{v_i
\cdot g_{\ell}\}$, then vertex $i$ is assigned to to cluster $h$.
Frieze and Jerrum were able to prove a lower bound on the
approximation guarantee of this algorithm for every $k$.  See Table
\ref{tbl:chart} for some of these ratios.  

\section{Goemans-Williamson Algorithm for {\sc Max-3-Cut}}\label{sec:GW}

Goemans and Williamson gave an algorithm for {\sc Max-3-Cut} in which
they first model the problem as a complex semidefinite program (i.e.,
each element is represented by a complex vector).  It is not too
difficult to see that these complex vectors are equivalent to
2-dimensional discs or sets of unit vectors.  For example, here is an
equivalent semidefinite program for {\sc Max-3-Cut}.  The input is an
undirected graph $G=(V,E)$ with non-negative edge weights
$\{w_{ij}\}$.
\begin{eqnarray}
& \max & \sum_{ij\in E} w_{ij} (1-v_i^1 \cdot v_j^1)\frac{2}{3}\\
v_i^a \cdot v_i^b & = & -1/2, \hspace{13mm} \forall i \in V, ~ a \neq b \in [3],\\
v_i^a \cdot v_j^b & = & v_i^{a+c} \cdot v_j^{b+c}, \hspace{5mm} \forall i,j
\in V, ~ a,b,c \in [3],\\
v_i^a \cdot v_j^b & \geq & -1/2, \hspace{13mm}\forall i,j \in V, ~a,b \in [3],\\
v_i^a \cdot v_i^a & = & 1, \hspace{20mm} \forall i \in V, ~ a \in [3],\\
v_i^a & \in & \mathbb{R}^{3n}, \hspace{15mm} \forall i \in V, ~a \in [3].
\end{eqnarray}
Consider a set of $3n$ unit vectors forming a solution to this
semidefinite program.
Note that for a fixed vertex $i \in V$, the vectors $v_i^1, v_i^2$ and
$v_i^3$ are in the same 2-dimensional plane, since they are
constrained to be pairwise $120^{\circ}$ apart.  
\begin{figure}[t]
\begin{center}
\epsfig{file = 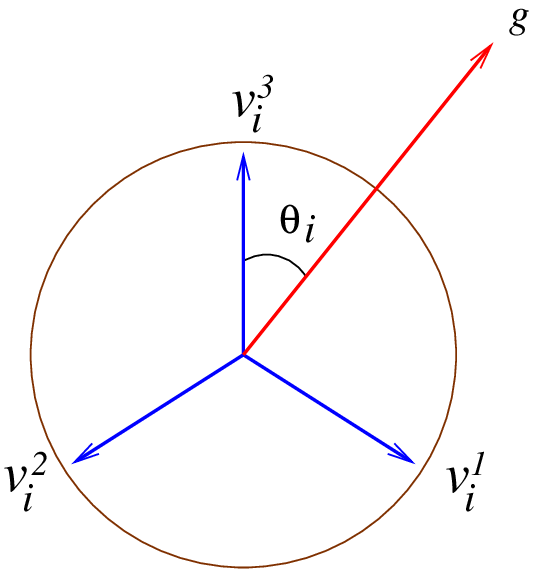, width=3.2cm}\hspace{10mm}
\epsfig{file = 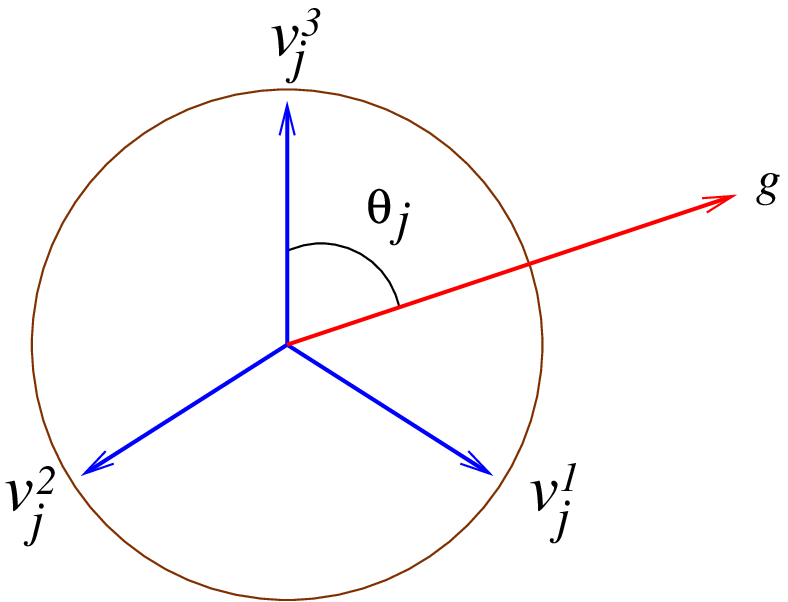, width=4cm}\\
\epsfig{file = 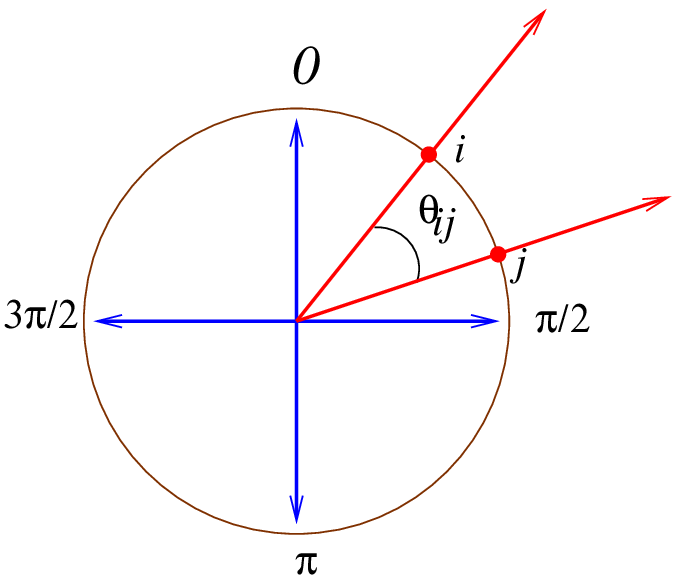, width=4cm}\hspace{6mm}
\epsfig{file = 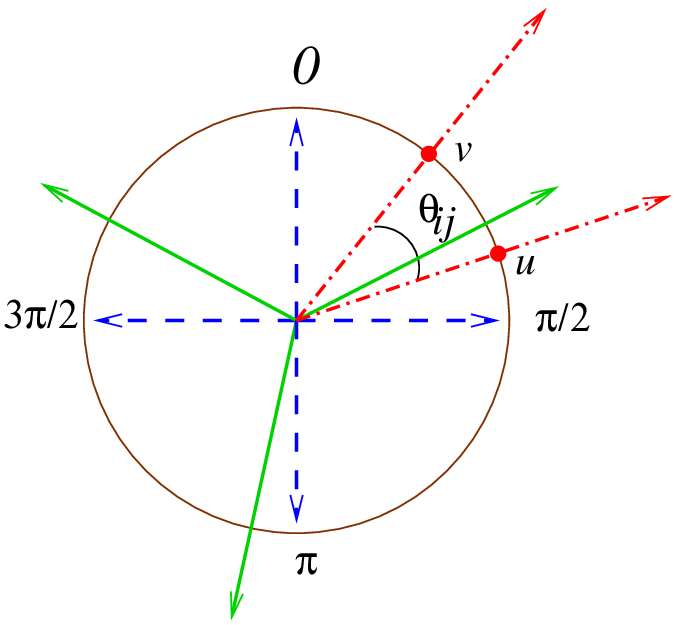, width=4cm}
  {\caption{Three vectors $v_i^1, v_i^2$ and $v_i^3$ lie on a
      2-dimensional plane corresponding to vertex $i$.  The vector $g$
      is projected onto the disc for element $i$ to obtain $\theta_i$.  Angle
      $\theta_{ij}$ is the difference between angles $\theta_i$ and $\theta_j$.} \label{fig:circle}}
\end{center}
\end{figure}
In an ``integer'' solution for this semidefinite program, all these discs would be
constrained to be in the same 2-dimensional space and each angle of
rotation of the discs would be constrained to be $0, 2\pi/3$ or
$4\pi/3$, where each angle would correspond to a partition.  In a
solution to the above relaxation, these discs are no longer
constrained to be in two dimensions.

In the rounding algorithm of Goemans and Williamson, we first pick a
vector $g \in \mathbb{R}^{3n}$ such that each entry is chosen
according to the normal distribution ${\cal N}(0,1)$.  Then for each
vertex $i \in V$, we project this vector $g$ onto its corresponding
disc.  This gives an angle $\theta_i$ in the range $[0,2\pi)$ for
  each element $i$.  (Note that without loss of generality, we can
  assume that $\theta_i$ is the angle in the clockwise direction
  between the projection of $g$ and the vector $v_i^3$.)  We can
  envision the angles $\{\theta_i\}$ for each $i \in V$ embedded onto
  the same disc.  Then we randomly partition this disc into three
  equal pieces, each of length $2\pi/3$ (i.e., we choose an angle $\psi
  \in [0, 2\pi)$ and let the three angles of partition be $\psi, \psi
    + 2\pi/3$ and $\psi + 4\pi/3$).  These three pieces correspond to
    the three sets in the partition.

The angle $\theta_{ij}$ is the angle $\theta_j - \theta_j$ modulo
$2\pi$.  The probability that an edge $ij$ is cut in this
partitioning scheme is equal to $3\theta_{ij}/2\pi$ if $\theta_{ij} <
2\pi/3$ and 1 otherwise.  In expectation, the angle $\theta_{ij}$ is
equal to $\arccos{(v_i^1 \cdot v_j^1)}$.  (This can be shown using the
techniques in \cite{GW}.  See Lemma 3 in
\cite{DBLP:conf/innovations/MakarychevN11}.)  But using the expected
angle is not sufficient to obtain an approximation guarantee better
than $2/3$; If angle $\theta_{ij}$ is $2\pi/3$ in expectation, then
one third of the time it could be zero (not cut) and two thirds of the
time it could be $\pi$ (cut).  However, it contributes 1 to the
objective function.
The exact probability that edge $ij$ is cut is:
\begin{eqnarray*}
\Pr[\text{edge } ij \text{ is cut}] ~ = ~ \sum^{2\pi/3}_{\gamma = 0} \Pr[\theta_{ij} = \theta] 
  \times \frac{\theta}{2\pi/3} + \sum^{4\pi/3}_{\gamma = 2\pi/3}
  \Pr[\theta_{ij} = \theta] + \sum_{\theta = 4\pi/3}^{2\pi}
  \Pr[\theta_{ij} = \theta] \times \frac{2\pi-\theta}{2\pi/3}.
\end{eqnarray*}
Therefore, we must compute $\Pr[\theta_{ij} = \theta]$ for all $\theta
\in [0,2\pi)$.  One of the main technical contributions of Goemans and
  Williamson~\cite{GW2} is that they compute the exact probability
  that $\theta_{ij} < \delta$ for all $\delta \in [0, 2\pi)$.  This
    can be found in Lemma 8~\cite{GW2}.  This enables them to compute
    the probability that an edge is cut, resulting in their
    approximation guarantee.

\section{Algorithm for {\sc Max-$k$-Cut}}\label{sec:mkcut}

As previously mentioned, we cannot model {\sc Max-$k$-Cut} as an
integer program directly using 2-dimensional discs as we do for {\sc
  Max-3-Cut}, because any rotation corresponding to an angle of at
least $2\pi/k$ should contribute 1 to the objective function.  Note
that in the case of {\sc Max-3-Cut}, there are two possible non-zero
rotations in an integer solution: $2\pi/3$ and $4\pi/3$ and both of
the contribute the same amount (i.e., 1) to the objective function.
Since it seems impossible to penalize all angles greater than $2\pi/k$
at the same cost, it seems similiarly impossible to model the problem
directly with a complex semidefinite program.

We now present our approach for rounding the semidefinite programming
relaxation $(Q)$ for {\mkc}.  
After solving the semidefinite program, we obtain a set of vectors
$\{v_i\}$ corresponding to each vertex $i \in V$.  We can assume these
vectors to be in dimension $n$.  Let ${\bf{0}}$ represent the vector
with $n$ zeros.  For each vertex $i \in V$, we construct the following
two orthogonal vectors:
\begin{eqnarray}
v_i ~ :=  ~(v_i,{\bf 0}), \quad \quad 
v_i^{\perp} ~ := ~ ({\bf 0}, v_i).
\end{eqnarray}  
Each vertex $i \in V$ now corresponds to a 2-dimensional disc spanned
by vectors $v_i$ and $v_i^{\perp}$.  Specifically, this 2-dimensional
disc consists of the
(continuous) set of vectors defined for $\phi \in [0,2\pi)$:
\begin{eqnarray}\label{def:phi}
v_i(\phi) & = & v_i \cos{\phi} + v_i^{\perp}\sin{\phi}.
\end{eqnarray}
Now that we have constructed a 2-dimensional disc for each element,
we can use the same rounding scheme due to Goemans and Williamson
described in the previous section:  First, we choose a vector
$g \in \mathbb{R}^{2n}$ in
which each coordinate is randomly chosen according to the normal
distribution ${\cal N}(0,1)$.  For each $i \in V$, we project this
vector $g$ onto the disc $\{v_i(\phi)\}$, which results in an angle
$\theta_i$, where:
\begin{eqnarray*}
g \cdot v_i(\theta_i) & = & \max_{0 \leq \phi < 2\pi} g \cdot v_i(\phi).
\end{eqnarray*}
Note that we do not have to compute infinitely many dot products,
since, for example, if $g\cdot v_i, ~g \cdot v_i^{\perp} \geq 0$, then:
\begin{eqnarray*}
\theta_i & = & \arctan{\left(\frac{g \cdot v_i^{\perp}}{g\cdot v_i}\right)},
\end{eqnarray*}
and the three other cases depending on the sign of $g\cdot v_i$ and $g \cdot
v_i^{\perp}$ can be handled accordingly.

After we find an angle $\theta_i$ for each $i \in V$, we can assign
each element to a position corresponding to its angle $\theta_i$ on a
single disc and divide this disc (randomly) into $k$ equal sections
of size $2\pi/k$.  Specifically, choose a random angle $\psi$ and use
the partition $\psi + \frac{c \cdot 2\pi}{k}$ for all integers $c \in
[0,k)$, where angles are taken modulo $2\pi$.  These are the $k$
  partitions of the vertices in the $k$-cut.

\begin{figure}[t]
\begin{center}
\epsfig{file = 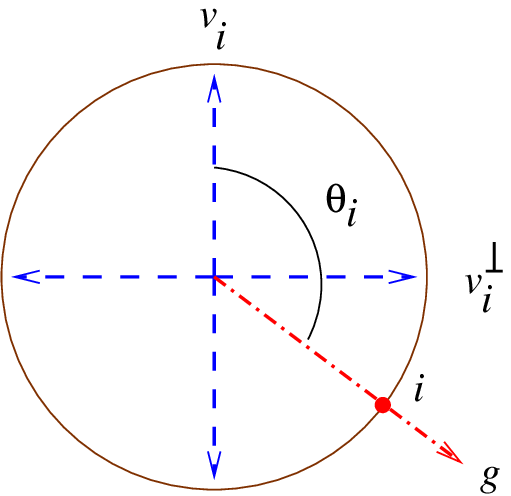, width=4cm}\hspace{3mm}
\epsfig{file = 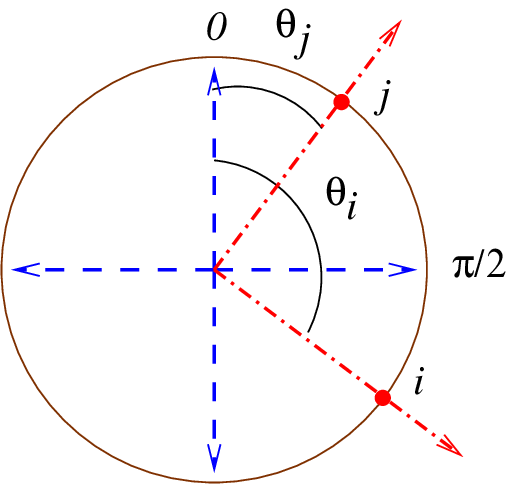, width=4cm}
\epsfig{file = 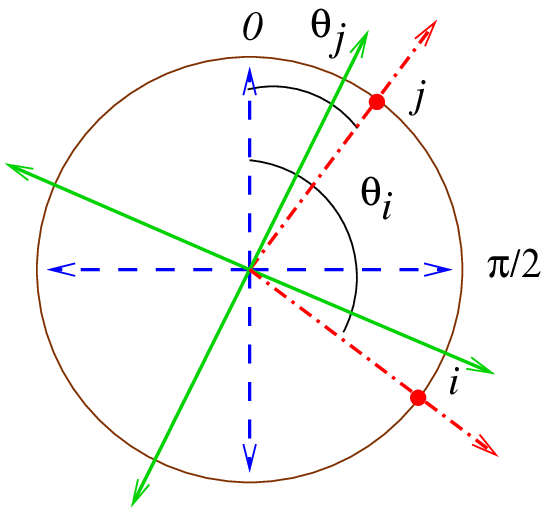, width=4cm}
  {\caption{A 2-dimensional plane for vertex $i$ spanning $v_i$ and
      $v_i^{\perp}$.  After projecting $\theta_i$ and $\theta_j$ onto
      the same disc, we partition the disc into $k=4$ equal sized
      pieces.}}
\end{center}
\end{figure}

\section{Analysis}\label{sec:analysis}

We prove that the distribution of the angle $\theta_{ij}$ is the
same as Lemma 8 of \cite{GW2}.  This implies that we can use the 
analysis that Goemans and Williamson use for {\sc Max-3-Cut} to obtain an analogous
approximation ratio for {\sc Max-$k$-Cut}.  
\begin{lemma}
Given two sets of vectors $x_i = \{x_i(\phi)\}$ and $x_j =
\{x_j(\phi)\}$ defined on $\phi \in [0, 2\pi)$, where
\begin{eqnarray*}
x_i(\phi) & = & (\cos\phi, ~\sin\phi,  ~0, ~0),\\
x_j(\phi) & = & (\cos\theta\cos\phi, ~\cos\theta\sin\phi,
~\sin\theta\cos\phi, ~\sin\theta\sin\phi).
\end{eqnarray*}
Let $\gamma \in [0, 2\pi)$ denote the angle $\theta_j - \theta_i$ after the
vector $g \in {\cal N}(0,1)^{2n}$ is projected onto $x_i$ and $x_j$.
Then for $\delta \in [0,2\pi)$,
\begin{eqnarray}
\Pr[0 \leq \gamma < \delta] = \frac{1}{2\pi} \left(\delta + \frac{r \sin{\delta}}{\sqrt{1-r^2\cos^2{\delta}}}
\arccos{(-r \cos{\delta})} \right).\label{form:last}
\end{eqnarray}
\end{lemma}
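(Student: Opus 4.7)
The plan is to compute the density of $\gamma$ directly from the joint Gaussianity of the two projections, then match it against the derivative of the claimed formula. Only four entries of $g$ affect the angles; denote them $g_1,\dots,g_4$. Then $g\cdot x_i(\phi)=g_1\cos\phi+g_2\sin\phi$ is maximised at $\theta_i=\arg(g_1,g_2)$, while $g\cdot x_j(\phi)$ is maximised at $\theta_j=\arg(A,B)$, where $A:=g_1\cos\theta+g_3\sin\theta$ and $B:=g_2\cos\theta+g_4\sin\theta$. Setting $X:=(g_1,g_2)$, $Y:=(A,B)$, and $r:=\cos\theta$, the pair $(X,Y)$ is a standard Gaussian in $\mathbb{R}^2\times\mathbb{R}^2$ with $\mathrm{Cov}(X_k,Y_k)=r$ and $\mathrm{Cov}(X_1,Y_2)=\mathrm{Cov}(X_2,Y_1)=0$.

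Converting to polar via $X=R_1(\cos\theta_i,\sin\theta_i)$ and $Y=R_2(\cos\theta_j,\sin\theta_j)$, the quadratic form in the Gaussian exponent collapses into $(R_1^2+R_2^2-2rR_1R_2\cos(\theta_j-\theta_i))/(2(1-r^2))$, with Jacobian $R_1R_2$. The density depends on $\theta_i,\theta_j$ only through $\gamma=\theta_j-\theta_i\bmod 2\pi$, so integrating out $\theta_i$ yields a free factor of $2\pi$. A second polar substitution $(R_1,R_2)=\rho(\cos\psi,\sin\psi)$ with $u=2\psi$ makes the $\rho$-integral elementary and leaves
\[
f_\gamma(\gamma)\;=\;\frac{1-r^2}{4\pi}\int_0^{\pi}\frac{\sin u}{(1-r\cos\gamma\sin u)^{2}}\,du.
\]

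The crux is evaluating this integral. Differentiating under the integral with $\beta:=r\cos\gamma$ reduces it to $\frac{d}{d\beta}\int_0^{\pi}\frac{du}{1-\beta\sin u}$, and the inner integral is obtained in closed form by the Weierstrass substitution $t=\tan(u/2)$, whose denominator becomes $(t-\beta)^{2}+(1-\beta^{2})$; the resulting arctangent simplifies to $\frac{2\arccos(-\beta)}{\sqrt{1-\beta^{2}}}$, and differentiating back yields an explicit form for $f_\gamma$. To finish, I would verify by direct differentiation that the right-hand side of \eqref{form:last} has derivative exactly $f_\gamma(\delta)$ and vanishes at $\delta=0$; the key cancellation uses the identity $1-r^{2}\cos^{2}\delta-r^{2}\sin^{2}\delta=1-r^{2}$, which makes the two expressions match term by term. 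The main obstacle is the closed-form $\arccos$ evaluation of the $u$-integral; once that identity is in hand, the rest is routine manipulation of Gaussian densities.
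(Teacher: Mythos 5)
Your proof is correct, but it takes a genuinely different route from the paper's. The paper parametrizes $g$ by two uniform angles $\alpha,\beta$ and two radii $p_1,p_2$ with density $ye^{-y^2/2}$, derives the intermediate expression
$\Pr[0\le\gamma<\delta]=\frac{1}{2\pi}\bigl[\delta+\int_{\delta}^{\pi}\Pr[\tfrac{p_2 s}{\sin\delta}\le\tfrac{p_1 r}{\sin(\alpha-\delta)}]\,d\alpha\bigr]$ by a case analysis on the sign of $r=\cos\theta$, and then \emph{cites} Lemma 8 of Goemans--Williamson for the closed-form evaluation and for the extension from $\delta<\pi$ to all of $[0,2\pi)$ by symmetry. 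You instead observe that $\theta_i=\arg(g_1,g_2)$ and $\theta_j=\arg(g_1\cos\theta+g_3\sin\theta,\;g_2\cos\theta+g_4\sin\theta)$ are the arguments of two jointly Gaussian planar vectors with cross-covariance $rI_2$, integrate out the radii and the common rotation to get the density $f_\gamma(\gamma)=\frac{1-r^2}{4\pi}\int_0^\pi\frac{\sin u}{(1-r\cos\gamma\sin u)^2}du$, evaluate it via $\frac{d}{d\beta}\int_0^\pi\frac{du}{1-\beta\sin u}=\frac{d}{d\beta}\frac{2\arccos(-\beta)}{\sqrt{1-\beta^2}}$ with $\beta=r\cos\gamma$, and confirm that the claimed right-hand side has this derivative and vanishes at $\delta=0$ (I checked: the derivative of \eqref{form:last} is $\frac{1-r^2}{2\pi}\bigl(\frac{1}{1-r^2\cos^2\delta}+\frac{r\cos\delta\,\arccos(-r\cos\delta)}{(1-r^2\cos^2\delta)^{3/2}}\bigr)$, which matches, using exactly the identity $r^2\sin^2\delta=(1-r^2\cos^2\delta)-(1-r^2)$ you name). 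What your route buys is self-containment --- it does not defer the hard closed-form step to an external lemma --- and uniformity: the density formula and the derivative check are valid for all $\delta\in[0,2\pi)$ and both signs of $r$ at once, so you avoid both the case split on $\theta\gtrless\pi/2$ and the final symmetry argument. What the paper's route buys is that it lands exactly on the integral form already evaluated in \cite{GW2}, making the equivalence with the complex-SDP rounding transparent. The only thing to tighten in a final write-up is to actually carry out the differentiation of $J(\beta)$ and of \eqref{form:last} rather than asserting them, and to note the degenerate cases $r=\pm1$ separately.
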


\begin{proof}
Note that the set of vectors $x_j$ is 2-dimensional, since
the 
angle between $x_j(\phi_1)$ and $x_j(\phi_2)$ for $\phi_2 > \phi_1$ is
$\phi_2-\phi_1$.  Thus, the rounding algorithm in Section
\ref{sec:mkcut} is well defined.
Recall that each coordinate of the vector $g$
is chosen according to the normal distribution ${\cal N}(0,1)$.
Even though the vector $g$ has $2n$ dimensions, we only need to
consider the first four, $g = (g_1, g_2, g_3, g_4)$.
This vector is chosen
equivalently to choosing $\alpha,\beta$ uniformly in $[0,2\pi)$ and
  $p_1, p_2$ according to the distribution:
$$f(y) = y e^{-y^2/2}.$$ 
In other words, the vector $g$ is equivalent to:
\begin{eqnarray*}
g & = & (p_1\cos{\beta}, ~p_1 \sin{\beta}, ~p_2\cos{\alpha}, ~p_2\sin{\alpha}).
\end{eqnarray*}
Let $r = \cos{\theta}$ and let $s = \sin{\theta}$.  
We will show that the probability that $\gamma \in [0,\delta)$ for
  $\delta \leq \pi$ is:
\begin{eqnarray}
\Pr[0 \leq \gamma < \delta] & = & \frac{1}{2\pi} \left[ \delta +
  \int^{\pi}_{\delta} \Pr\left[\frac{p_2 \cdot s}{\sin{\delta}}
    \leq \frac{p_1 \cdot r}{\sin{(\alpha-\delta)}}\right]
  d\alpha\right]. \label{fourteen}
\end{eqnarray}
Lemma 8 in \cite{GW2} shows this is equivalent to probability in
\eqref{form:last}.  

First, let us consider the case when $\theta \in [0,\pi/2]$, or
$\cos\theta \geq 0$.
Without loss of generality, assume that the projection of $g$ onto the
2-dimensional disc $x_i$ occurs at $\phi = 0$.  Then we can see that
\begin{eqnarray*}
x_i(0) \cdot g = p_1.
\end{eqnarray*}
In other words, we can assume that $\theta_i = 0$.
As previously mentioned, $\alpha$ is chosen uniformly in the range
$[0,2\pi)$.  However, if $\gamma < \delta$, then $\alpha < \pi$.
If $\alpha < \delta$, then the projection of $g$ onto
  $x_j$, namely $\theta_j$ (which equals $\theta_{ij}$ in this case,
  because we have assumed that $\theta_i = 0$), is less than $\delta$.
The probability
that $\gamma \leq \delta$ if $\alpha \in [\delta, \pi)$ is equal to
the probability
that:
\begin{eqnarray*}
\frac{p_2 \cdot s}{\sin{\delta}} & \leq & \frac{p_1 \cdot
  r}{\sin{(\alpha-\delta)}} \quad \iff\\
p_2 \cdot s & \leq & \frac{p_1 \cdot
  r}{\sin{(\alpha-\delta)}} \cdot \sin{\delta}.
\end{eqnarray*}
(See Figure 3 in \cite{GW2}.)
If $\theta \in (\pi/2, \pi)$ and $r = \cos\theta < 0$, then the
probability that $\gamma$ is in $[0, \delta)$ is the probability that
  $\gamma$ is in $[\pi, \pi+\delta)$, which is $\delta/(2\pi)$.  And
    the probability that $\gamma$ is in $[\delta, \pi)$ is the
      probability that $\gamma$ is in $[\pi+\delta, 2\pi)$ for $-r$.
        This is:
\begin{eqnarray}
p_2 \cdot s & \leq & \frac{p_1 \cdot
  (-r)}{\sin{(\alpha-\delta)}} \cdot \sin{(\pi + \delta)}.
\end{eqnarray}
However, since $\sin{(\pi + \delta)} = - \sin{\delta}$, we have:
\begin{eqnarray}
p_2 \cdot s & \leq & \frac{p_1 \cdot
  r}{\sin{(\alpha-\delta)}} \cdot \sin{\delta}.
\end{eqnarray}
Thus for all $\delta < \pi$, we have proved the expression in
\eqref{fourteen}.
In Lemma 8 of \cite{GW2}, they show that Equation \eqref{fourteen} 
is equivalent to Equation \eqref{form:last} when $\delta < \pi$.  Then
they argue by symmetry that Equation \eqref{form:last} also holds when
$\pi \leq \delta < 2\pi$.\end{proof}

\begin{lemma}
Suppose $v_i \cdot v_j = \cos\theta$ for two unit vectors $v_i$ and
$v_j$.  Let $v_i(\phi)$ and $v_j(\phi)$ be defined as in equation
\eqref{def:phi}.  Then, we can assume that:
\begin{eqnarray*}
v_i(\phi) & = & (\cos\phi, ~\sin\phi,  ~0, ~0),\\
v_j(\phi) & = & (\cos\theta\cos\phi, ~\cos\theta\sin\phi,
~\sin\theta\cos\phi, ~\sin\theta\sin\phi).
\end{eqnarray*}
\end{lemma}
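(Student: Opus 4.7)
The plan is to verify that the four vectors $v_i$, $v_i^{\perp}$, $v_j$, $v_j^{\perp}$ constructed in Section \ref{sec:mkcut} share the same pairwise inner products as the four coordinate vectors obtained by setting $\phi = 0$ and $\phi = \pi/2$ in the claimed target expressions, and then to invoke the standard fact that two collections of Euclidean vectors with identical Gram matrices differ by an orthogonal transformation of the ambient space. Since both the rounding procedure of Section \ref{sec:mkcut} and the distribution computed in the previous lemma depend only on inner products, such a change of orthonormal basis is harmless; it therefore lets us \emph{assume} the constructed vectors take the stated form.

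Concretely, I would first read off the Gram matrix of the constructed quadruple directly from the blockwise definitions $v_i = (v_i,\mathbf{0})$, $v_i^{\perp} = (\mathbf{0}, v_i)$, $v_j = (v_j, \mathbf{0})$, $v_j^{\perp} = (\mathbf{0}, v_j)$. Unit norms are immediate; $v_i \cdot v_i^{\perp}$ and $v_j \cdot v_j^{\perp}$ vanish by the block structure; the two ``diagonal'' pairings give $v_i \cdot v_j = v_i^{\perp} \cdot v_j^{\perp} = \cos\theta$ by hypothesis; and the two ``cross'' pairings $v_i \cdot v_j^{\perp}$ and $v_i^{\perp} \cdot v_j$ vanish because one vector in each pair lies entirely in the first $n$ coordinates while the other lies entirely in the last $n$. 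A direct evaluation of the target expressions at $\phi = 0$ and $\phi = \pi/2$ yields exactly the same $4 \times 4$ Gram matrix, so the two quadruples are congruent.

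Having matched the Gram matrices, I will pick an orthonormal basis for the (at most four-dimensional) span of $v_i, v_i^{\perp}, v_j, v_j^{\perp}$, extend it arbitrarily to an orthonormal basis of $\mathbb{R}^{2n}$, and permute the four relevant coordinates so that $v_i = (1,0,0,0,\ldots)$ and $v_i^{\perp} = (0,1,0,0,\ldots)$. In this basis the remaining two vectors must then become $(\cos\theta, 0, \sin\theta, 0, \ldots)$ and $(0, \cos\theta, 0, \sin\theta, \ldots)$, since the Gram matrix determines them up to a residual orthogonal transformation which may be absorbed into the basis extension. Substituting these coordinate forms into the definition $v_i(\phi) = v_i\cos\phi + v_i^{\perp}\sin\phi$ from \eqref{def:phi} then yields the claimed formulas for all $\phi \in [0, 2\pi)$. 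The only care needed is the coordinate bookkeeping described above, so no substantive obstacle is anticipated; the argument is entirely linear-algebraic and the degenerate cases $\theta \in \{0, \pi\}$ (in which the span collapses to two dimensions) are handled by the same computation.
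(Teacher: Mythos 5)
Your proposal is correct and follows essentially the same route as the paper: both arguments reduce the claim to checking that the block construction $v_i = (v_i,\mathbf{0})$, $v_i^{\perp} = (\mathbf{0},v_i)$ forces the cross inner products $v_i \cdot v_j^{\perp} = v_i^{\perp}\cdot v_j = 0$ and the diagonal ones to equal $\cos\theta$, so that all pairwise dot products agree with those of the claimed coordinate forms. The paper verifies this by expanding $v_i(\phi_1)\cdot v_j(\phi_2)$ directly and asserting equivalence of the two sets, while you make the final step explicit via the Gram-matrix-congruence and change-of-orthonormal-basis argument --- a slightly more careful write-up of the same idea.
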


\begin{proof}
From the definition (in Equation \eqref{def:phi}) of $v_i(\phi)$, we
can see that:
\begin{eqnarray*}
v_i(\phi_1) \cdot v_j(\phi_2) & = & (v_i \cos{\phi_1} +
v_i^{\perp}\sin{\phi_1})\cdot (v_j \cos{\phi_2} + v_j^{\perp}
\sin{\phi_2})\\
& = & v_i \cdot v_j \cos\phi_1 \cos\phi_2 + v_i^{\perp} \cdot
v_j^{\perp} \sin\phi_1 \sin\phi_2 + v_i \cdot v_j^{\perp} \cos\phi_1
\sin\phi_2 + v_i^{\perp} \cdot v_j \sin\phi_1 \cos\phi_2\\
& = & 
\cos\theta
(\cos{\phi_1}\cos{\phi_2} + \sin{\phi_1}\sin{\phi_2}).
\end{eqnarray*}
Note that $v_i \cdot v_j^{\perp} = v_i^{\perp}\cdot v_j = 0$ since
each $v_i$ vector has $n$ zeros in the second half of the entries and each
$v_i^{\perp}$ vector has $n$ zeros in the first half of the entries.
If we compute $v_i(\phi_1)\cdot v_j(\phi_2)$ using the assumption in
the lemma, then we get the same dot product.  Thus, the two sets are
equivalent.\end{proof}

\vspace{5mm}

Since the distribution of the angle is the same, we can use the
same analysis of \cite{GW2} (generalized from $3$ to $k$) to prove the
following Lemma.  Although it is essentially the exact same proof, we
include it here for completeness.  
As in Corollary 9 of \cite{GW2}, we define:
\begin{eqnarray*}
g(r,\delta) & = & \frac{1}{2\pi}\left(\delta + \frac{r \sin{\delta}}{\sqrt{1-r^2\cos^2{\delta}}}\arccos{(-r \cos{\delta})} \right).
\end{eqnarray*}
In other words, $g(r,\delta)$ is the probability that angle
$\theta_{ij}$ obtained by projecting $g$ onto
the two discs $\{v_i{(\phi)}\}$ and $\{v_j(\phi)\}$, correlated
by $r = v_i \cdot v_j$, is less than $\delta$.

\begin{lemma}\label{closed_form}
Let $r = v_i\cdot v_j$ and let $y_i \in [0,1,2, \dots k)$ be the
  integer assignment of vertex $i$ to its partition.  Then the probability that the equation $y_i -
y_j \equiv c ~(\bmod ~k)$ is satisfied is
\begin{eqnarray*}
 \frac{1}{k} + \frac{k}{8\pi^2}\left[2
    \arccos^2\left(-r\cos\left(\frac{2\pi c}{k}\right)
    \right) - \arccos^2\left(-r\cos\left( \frac{2\pi(c+1)}{k} \right) \right) -
    \arccos^2\left(-r \cos\left( \frac{2\pi(c-1)}{k}\right)
    \right) \right].
\end{eqnarray*}
\end{lemma}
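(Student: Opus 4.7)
The plan is to compute the probability by conditioning on the angle $\beta := \theta_{ij} \in [0, 2\pi)$ produced by projecting $g$ onto the two discs, thereby separating that randomness from the randomness of the partition offset $\psi$. The two preceding lemmas in this section give that the CDF of $\beta$ is exactly $g(r, \cdot)$, so
\[
\Pr[y_j - y_i \equiv c \pmod{k}] \;=\; \int_0^{2\pi} p_c(\beta)\, g'(r, \beta)\, d\beta,
\]
where $p_c(\beta)$ is the conditional probability over $\psi$ alone that the two labels differ by $c$ given a disc-angle gap of $\beta$.

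First I would compute $p_c$ explicitly. With $\omega := 2\pi/k$, the offset $t := (\theta_i - \psi) \bmod \omega$ is uniform on $[0, \omega)$, and the event $y_j - y_i \equiv c \pmod{k}$ is equivalent to $(t + \beta - c\omega) \bmod 2\pi \in [0, \omega)$. Splitting into the cases $\beta \in [(c-1)\omega, c\omega]$ and $\beta \in [c\omega, (c+1)\omega]$ yields the tent function that equals $1$ at $\beta = c\omega$ and decreases linearly to $0$ at the neighbouring points $\beta = (c \pm 1)\omega$. Integration by parts then collapses the expectation to
\[
\frac{1}{\omega}\!\left(\int_{c\omega}^{(c+1)\omega}\! g(r, \beta)\, d\beta \;-\; \int_{(c-1)\omega}^{c\omega}\! g(r, \beta)\, d\beta\right),
\]
since $p_c'$ is piecewise constant with value $\pm 1/\omega$ and the boundary terms vanish (the case $c = 0$, where the tent wraps around $0 \sim 2\pi$, is handled identically by cyclic symmetry and gives the same expression).

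The crucial algebraic observation is that $g$ admits an elementary antiderivative. Writing $h(\delta) := \arccos(-r\cos\delta)$, a direct differentiation yields $h'(\delta) = -r\sin\delta / \sqrt{1 - r^2\cos^2\delta}$, from which
\[
g(r, \delta) \;=\; \frac{\delta}{2\pi} \;-\; \frac{1}{4\pi}\, \frac{d}{d\delta}\bigl[h(\delta)^2\bigr].
\]
Inserting this into the two integrals, the polynomial piece contributes $2\omega^2/(4\pi)$, producing the $1/k$ term after division by $\omega$; the exact-derivative piece telescopes into the second-order finite difference $2 h(c\omega)^2 - h((c+1)\omega)^2 - h((c-1)\omega)^2$, scaled by $1/(4\pi\omega) = k/(8\pi^2)$, which is exactly the claimed formula. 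The only step demanding genuine care is the bookkeeping of signs: one must verify that the sign of $h'$ combines with the minus sign coming from integration by parts to produce the $+2 / -1 / -1$ pattern rather than its negative, which is easily cross-checked against the known Goemans-Williamson answer for $k=3$.
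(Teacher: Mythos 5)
Your proposal is correct and follows essentially the same route as the paper: both reduce the probability to $\frac{k}{2\pi}\bigl(\int_{2\pi c/k}^{2\pi(c+1)/k} g(r,\nu)\,d\nu-\int_{2\pi(c-1)/k}^{2\pi c/k} g(r,\nu)\,d\nu\bigr)$ and then evaluate this via the antiderivative $-\tfrac{1}{2}\arccos^2(-r\cos\nu)$ of the non-polynomial part of $g$, yielding the same $1/k$ term and the same $+2/-1/-1$ pattern of squared arccosines. The only difference is cosmetic: you condition on the disc-angle gap $\beta$ and integrate the tent-shaped conditional probability by parts, whereas the paper conditions on the partition offset $\tau$ first and uses the CDF of $\gamma$ directly; the two derivations are related by a Fubini swap.
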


\begin{proof}
$\Pr[y_i-y_j \equiv c ~(\bmod ~k) \text{
    satisfied}]$
\begin{align*}
& =  \frac{k}{2\pi} \int^{\frac{k}{2\pi}}_0
 \Pr_{\gamma}\left[\frac{2\pi c}{k}-\tau \leq \gamma <
   \frac{2\pi(c+1)}{k} -\tau \right] d\tau\\
& =  \frac{k}{2\pi} \int^{\frac{2\pi}{k}}_0 \left( g\left(\tau, \frac{2\pi(c+1)}{k}
 - \tau) - g(\tau, \frac{2\pi c}{k} - \tau\right)
\right) d\tau\\
& =  \frac{k}{2\pi} \int^{\frac{2\pi(c+1)}{k}}_{\frac{2\pi
    c}{k}} g(r, \nu)d\nu - \frac{k}{2\pi} \int^{\frac{2\pi
    c}{k}}_{\frac{2\pi(c-1)}{k}} g(r,\nu) d\nu\\
& =  \frac{k}{2\pi}\frac{1}{2\pi}
\left( \int^{\frac{2\pi(c+1)}{k}}_{\frac{2\pi c}{k}}
\nu d\nu -\left[\frac{1}{2}\arccos^2{(-r\cos{\nu})}
  \right]^{\frac{2\pi(c+1)}{k}}_{\frac{2\pi
    c}{k}} \nonumber \right.\\ 
& \quad \quad \left.- \int^{\frac{2\pi c}{k}}_{\frac{2\pi(c-1)}{k}} \nu
d\nu + \left[\frac{1}{2} \arccos^2{(-r\cos{\nu})}
  \right]^{\frac{2\pi c}{k}}_{\frac{2\pi(c-1)}{k}} \right) \\
& =  \frac{k}{8\pi^2} \left[ \left( \frac{2\pi(c+1)}{k}\right)^2 + \left(\frac{2\pi(c-1)}{k}\right)^2 -2
  \left(\frac{2\pi c}{k}\right)^2 \nonumber \right]\\
& \quad \quad + \frac{k}{8\pi^2}\left[2 \arccos^2\left(-r \cos \left(\frac{2\pi c}{k}\right) \right) \right. \nonumber\\
& \quad \quad \left.- \arccos^2\left(-r\cos \left(\frac{2\pi(c+1)}{k} \right)\right)
-\arccos^2\left(-r \cos \left(\frac{2\pi(c-1)}{k}\right)
  \right) \right]\\
& =  \frac{1}{k} + \frac{k}{8\pi^2}\left[2
    \arccos^2\left(-r\cos\left(\frac{2\pi c}{k}\right)
    \right) \right. \nonumber\\
& \quad \quad \left. - \arccos^2\left(-r\cos\left( \frac{2\pi(c+1)}{k} \right) \right) -
    \arccos^2\left(-r \cos\left( \frac{2\pi(c-1)}{k}\right)
    \right) \right].
\end{align*}
\end{proof}

\begin{lemma}\label{lemm:not}
Let $r = v_i\cdot v_j$.
The
probability that edge $ij$ is {\em not} cut by our algorithm is:
\begin{eqnarray*}
\frac{1}{k} + \frac{k}{4\pi^2} \left[\arccos^2\left(-r\right)-
  \arccos^2\left(-r\cos\left( \frac{2\pi}{k} \right) \right)\right].
\end{eqnarray*}
\end{lemma}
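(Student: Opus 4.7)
The plan is to derive Lemma \ref{lemm:not} as an immediate corollary of Lemma \ref{closed_form}. An edge $ij$ is not cut precisely when $y_i = y_j$, i.e., when the equation $y_i - y_j \equiv 0 \pmod{k}$ is satisfied. So the natural approach is simply to substitute $c = 0$ into the closed-form expression given in Lemma \ref{closed_form}.

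Carrying out the substitution, the general formula
\[
\frac{1}{k} + \frac{k}{8\pi^2}\left[2\arccos^2\!\left(-r\cos\!\left(\tfrac{2\pi c}{k}\right)\right) - \arccos^2\!\left(-r\cos\!\left(\tfrac{2\pi(c+1)}{k}\right)\right) - \arccos^2\!\left(-r\cos\!\left(\tfrac{2\pi(c-1)}{k}\right)\right)\right]
\]
becomes, at $c=0$,
\[
\frac{1}{k} + \frac{k}{8\pi^2}\left[2\arccos^2(-r) - \arccos^2\!\left(-r\cos\!\left(\tfrac{2\pi}{k}\right)\right) - \arccos^2\!\left(-r\cos\!\left(-\tfrac{2\pi}{k}\right)\right)\right],
\]
using $\cos 0 = 1$. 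The key simplification I would invoke is that cosine is even, so $\cos(-2\pi/k) = \cos(2\pi/k)$, which collapses the two rightmost $\arccos^2$ terms into a single term with coefficient $2$. Dividing through by $2$ then yields
\[
\frac{1}{k} + \frac{k}{4\pi^2}\left[\arccos^2(-r) - \arccos^2\!\left(-r\cos\!\left(\tfrac{2\pi}{k}\right)\right)\right],
\]
which is exactly the claimed expression.

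There is essentially no obstacle here: the work was already done in Lemma \ref{closed_form}. The only thing worth double-checking is the convention that $c$ is interpreted modulo $k$, so that $c-1 \equiv -1$ contributes $\cos(-2\pi/k)$ rather than $\cos(2\pi(k-1)/k)$; but these are equal anyway, since $\cos(2\pi(k-1)/k) = \cos(2\pi - 2\pi/k) = \cos(2\pi/k)$, so either convention leads to the same result. Hence the lemma follows directly.
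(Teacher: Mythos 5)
Your proof is correct and is essentially identical to the paper's own: both substitute $c=0$ into Lemma \ref{closed_form} and use the evenness of cosine to merge the $c-1$ and $c+1$ terms. Your extra remark that $\cos(2\pi(k-1)/k)=\cos(2\pi/k)$, so the choice of representative for $c-1$ is immaterial, is a harmless clarification.
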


\begin{proof}
In the case of {\sc Max-$k$-Cut}, we set $c=0$.  By Lemma
\ref{closed_form}, we have the probability that edge $ij$ is not
cut is:
\begin{eqnarray*}
\frac{1}{k} + \frac{k}{8\pi^2} \left[2 \arccos^2\left(-r\right)- \arccos^2\left(-r\cos\left( \frac{2\pi}{k} \right) \right) -
    \arccos^2\left( -r \cos\left( -\frac{2\pi}{k} \right)
    \right)\right]
\end{eqnarray*}
\begin{eqnarray*}
& = & \frac{1}{k} + \frac{k}{8\pi^2} \left[2 \arccos^2\left(-r\right)-
  2\arccos^2\left(-r\cos\left( \frac{2\pi}{k} \right) \right)\right]\\
& = & \frac{1}{k} + \frac{k}{4\pi^2} \left[\arccos^2\left(-r\right)-
  \arccos^2\left(-r\cos\left( \frac{2\pi}{k} \right) \right)\right].
\end{eqnarray*}\end{proof}

\begin{lemma}
Let $r = v_i \cdot v_j$.  The probability that
edge $ij$ is cut by our algorithm is:
\begin{eqnarray}
\frac{k-1}{k} 
+ \frac{k}{4\pi^2} \left[\arccos^2\left(-r\cdot \cos\left( \frac{2\pi}{k} \right) \right)
 - \arccos^2{(-r)} \right].\label{edge_guar}
\end{eqnarray}
\end{lemma}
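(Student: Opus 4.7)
The plan is to derive this statement as an immediate consequence of Lemma \ref{lemm:not}. Since the rounding algorithm assigns each vertex to exactly one of the $k$ classes determined by the random partition of the disc, the event ``edge $ij$ is cut'' is the exact complement of the event ``edge $ij$ is not cut''. Hence the probability that $ij$ is cut equals $1$ minus the expression given in Lemma \ref{lemm:not}.

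Concretely, I would begin by writing
\begin{eqnarray*}
\Pr[\text{edge } ij \text{ cut}] & = & 1 - \Pr[\text{edge } ij \text{ not cut}] \\
& = & 1 - \frac{1}{k} - \frac{k}{4\pi^2}\left[\arccos^2(-r) - \arccos^2\left(-r\cos\left(\frac{2\pi}{k}\right)\right)\right].
\end{eqnarray*}
Then I would collect $1 - \tfrac{1}{k} = \tfrac{k-1}{k}$ and flip the sign of the bracketed factor, moving $\arccos^2(-r\cos(2\pi/k))$ to the front, to obtain exactly the expression in \eqref{edge_guar}.

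There is no real obstacle here: the entire substance of the argument lives in the earlier Lemma \ref{closed_form}, whose $c=0$ case specializes to Lemma \ref{lemm:not}, and in the distributional identity for $\gamma$ that connects our disc construction to the formula of \cite{GW2}. This final lemma should be viewed as a cosmetic repackaging of Lemma \ref{lemm:not} in the form most convenient for comparing against the SDP objective $(1 - v_i\cdot v_j)\tfrac{k-1}{k}$ and hence for extracting the closed-form approximation ratio $\phi_k$ in Theorem \ref{thm:main}.
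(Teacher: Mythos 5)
Your proposal is correct and matches the paper's own proof: both take the complement of the ``not cut'' probability from Lemma \ref{lemm:not} and rearrange $1-\tfrac{1}{k}=\tfrac{k-1}{k}$ with a sign flip on the bracketed difference. Nothing further is needed.
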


\begin{proof}
By Lemma \ref{lemm:not} and the previously stated assumption that $r = v_i
\cdot v_j = \cos{(\theta_{ij})}$, we have:
\begin{eqnarray*}
&& 1 - \left[\frac{1}{k} + \frac{k}{4\pi^2} \left[\arccos^2\left(-r\right)-
  \arccos^2\left(-r\cos\left( \frac{2\pi}{k} \right) \right)\right]
  \right] \\
& = & \frac{k-1}{k} - \frac{k}{4\pi^2} \left[\arccos^2\left(-r\right)-
  \arccos^2\left(-r\cos\left( \frac{2\pi}{k} \right) \right)\right]\\
& = & \frac{k-1}{k} 
+ \frac{k}{4\pi^2} \left[\arccos^2\left(-r\cos\left( \frac{2\pi}{k} \right) \right)
 - \arccos^2\left(-r\right) \right].
\end{eqnarray*}
\end{proof}

\begin{theorem}\label{thm:main}
The worst case approximation ratio of our algorithm for {\sc
  Max-$k$-Cut} is:
\begin{eqnarray*}
\phi_k & = & \frac{k-1}{k} 
+ \frac{k}{4\pi^2} \left[\arccos^2\left(\left(\frac{1}{k-1}\right)\cos\left( \frac{2\pi}{k} \right) \right)
 - \arccos^2\left(\frac{1}{k-1}\right) \right].
\end{eqnarray*}
\end{theorem}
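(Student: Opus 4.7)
The previous lemma gives, for any edge $ij$ with $r = v_i \cdot v_j$, the exact probability
$$P(r) \;=\; \frac{k-1}{k} + \frac{k}{4\pi^2}\left[\arccos^2\!\left(-r\cos\tfrac{2\pi}{k}\right) - \arccos^2(-r)\right]$$
that $ij$ is cut, while the SDP $(Q)$ contributes $(1-r)\tfrac{k-1}{k}$. So the plan is: define the per-edge ratio
$$\alpha_k(r) \;=\; \frac{P(r)}{\tfrac{k-1}{k}(1-r)}, \qquad r \in \left[-\tfrac{1}{k-1},\,1\right),$$
prove that $\min_r \alpha_k(r) = \alpha_k\!\bigl(-\tfrac{1}{k-1}\bigr) = \phi_k$, and then conclude by summing over edges (standard linearity of expectation argument: the expected value of the rounded solution is at least $\phi_k$ times the SDP value, which is at least $\phi_k$ times OPT).

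First I would handle the easy endpoint. At $r = -1/(k-1)$ the denominator equals $\tfrac{k-1}{k}\cdot\tfrac{k}{k-1} = 1$, so $\alpha_k(-1/(k-1))$ equals $P(-1/(k-1))$, which is exactly the claimed $\phi_k$ after substituting $-r = 1/(k-1)$ into the previous lemma. So the theorem reduces entirely to showing that the minimum of $\alpha_k$ on the feasible interval occurs at the left endpoint.

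For the minimization, I would follow the template of Goemans--Williamson's analysis for $k=3$. Equivalently, let
$$F_k(r) \;=\; P(r) \;-\; \phi_k\cdot\tfrac{k-1}{k}(1-r),$$
so that the theorem is equivalent to $F_k(r) \geq 0$ on $[-1/(k-1),1]$. We have $F_k(-1/(k-1))=0$ by definition of $\phi_k$, and a direct computation (using $\arccos(-\cos\tfrac{2\pi}{k}) = \pi - \tfrac{2\pi}{k}$ and $\arccos(-1)=\pi$) yields $F_k(1)=0$ as well, since $P(1)=0$ (an edge with $v_i=v_j$ is never cut, so its contribution to both sides vanishes). Thus it suffices to check that $F_k$ has no interior root on $(-1/(k-1),1)$, which I would do by studying $F_k'(r)$: differentiating $P(r)$ gives a combination of two terms of the form $\arccos(-r c)/\sqrt{1 - r^2 c^2}$ with $c \in \{1, \cos(2\pi/k)\}$, and the resulting equation $F_k'(r) = 0$ can be analyzed to show that $F_k$ is concave (or at worst has the correct sign pattern) on the interior. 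This is the exact structure that appears in Corollary 9 of \cite{GW2} for $k=3$, and it extends in a routine way.

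The main obstacle is precisely this last step: giving a fully rigorous analytic proof that $F_k \geq 0$ on the whole interval (as opposed to verifying it numerically for each fixed $k$). In practice one either invokes the monotonicity/concavity of the arccos-squared expression that is already used in \cite{GW2,DBLP:journals/jco/KlerkPW04}, or one observes that the minimizer of $\alpha_k$ must be a critical point where the SDP constraint is tight and notes that the only tight point is $r=-1/(k-1)$. Either route makes the theorem essentially a corollary of the previous lemma together with a one-variable inequality that parallels the $k=3$ case of \cite{GW2}.
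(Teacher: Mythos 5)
Your proposal follows essentially the same route as the paper: do an edge-by-edge analysis and argue that the per-edge ratio of the cut probability $P(r)$ to the SDP contribution $\frac{k-1}{k}(1-r)$, over $r \in [-1/(k-1),\,1]$, is minimized at $r=-1/(k-1)$, where the denominator equals $1$ and the ratio equals $\phi_k$. In fact the paper's own proof is terser than yours: it consists of the single assertion that the expression in \eqref{edge_guar} ``is minimized when $r=-1/(k-1)$'' and supplies no analytic verification of the one-variable inequality, so the step you flag as the main obstacle is left unproved in the paper as well. Two remarks. First, your formulation via the ratio $\alpha_k(r)$ is the correct reading of what must be shown: the raw cut probability is certainly \emph{not} minimized at $r=-1/(k-1)$ (it equals $0$ at $r=1$ and $\phi_k>0$ at $r=-1/(k-1)$), so the quantity that must attain its minimum at the left endpoint is the ratio, exactly as you set it up. Second, a small logical slip: knowing that $F_k$ vanishes at both endpoints and ``has no interior root'' does not by itself yield $F_k\ge 0$, since $F_k$ could then be negative on the entire interior; you need the concavity of $F_k$ (or a sign check at one interior point, or the sign of $F_k'$ at an endpoint) that you mention but do not establish. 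That one-variable inequality is the genuine mathematical content of the theorem, and neither your proposal nor the paper's two-sentence proof actually establishes it; modulo that shared omission, your argument matches the paper's.
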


\begin{proof}
As a function of $r$ in the range $[1,-1/(k-1)]$, the
expression in Equation \ref{edge_guar} is minimized when $r =
-1/(k-1)$.  Thus, if we do an edge-by-edge analysis, the worst case
approximation ratio is obtained when $v_i \cdot v_j = -1/(k-1)$ for
all edges $ij \in E$.  \end{proof}

\section{Another Rounding Algorithm}\label{sec:another}

The algorithm presented in Section 4 can be restated as the following
rounding scheme.  Let $w_1, w_2$ and $w_3$ denote vectors in
$\mathbb{R}^2$ with pairwise dot product $-1/2$.  In other words,
$w_1, w_2$ and $w_3$ are the vertices of the simplex $\Sigma_3$.  Now
take two random gaussians $g_1, g_2 \in \mathbb{R}^n$ and set $x_i =
g_1 \cdot v_i$, $y_i = g_2 \cdot v_i$.  To assign the vertex $i$ to
one of the three partitions, we simply assign it to $j$ such that $w_j
\cdot (x_i, y_i)$ is maximized.

We can generalize this approach by choosing $k-1$ random gaussians,
$g_1, \dots, g_{k-1}$.  For each vertex $i$, we obtain the vector
$(g_1\cdot v_i, g_2 \cdot v_i, \dots, g_{k-1} \cdot v_i)$ in
$\mathbb{R}^{k-1}$.  This vector is assigned to the closest vertex of
$\Sigma_k$.  Computationally, this rounding scheme seems to yield 
approximation ratios that match those of De Klerk et al.

\section{Acknowledgements}

Thanks to Moses Charikar, Anupam Gupta, R. Ravi and
Madhur Tulsiani for
helpful discussions and comments on the presentation.

\bibliography{kcut}

\begin{thebibliography}{dKPW04}

\bibitem[AEH01]{AEH}
Gunnar Andersson, Lars Engebretsen, and Johan {H\aa stad}.
\newblock A new way to use semidefinite programming with applications to linear
  equations mod $p$.
\newblock {\em Journal of Algorithms}, 39:162--204, 2001.

\bibitem[dKPW04]{DBLP:journals/jco/KlerkPW04}
Etienne de~Klerk, Dmitrii~V. Pasechnik, and Joost~P. Warners.
\newblock On approximate graph colouring and max-$k$-cut algorithms based on
  the $\theta$-function.
\newblock {\em Journal of Combinatorial Optimization}, 8(3):267--294, 2004.

\bibitem[FJ97]{FJ}
Alan Frieze and Mark Jerrum.
\newblock Improved approximation algorithms for max-$k$-cut and max bisection.
\newblock {\em Algorithmica}, 18(1):61--77, 1997.

\bibitem[GW95]{GW}
Michel~X. Goemans and David~P. Williamson.
\newblock Improved approximation algorithms for maximum cut and satisfiability
  problems using semidefinite programming.
\newblock {\em Journal of the ACM}, 42(6):1115--1145, 1995.

\bibitem[GW04]{GW2}
Michel~X. Goemans and David~P. Williamson.
\newblock Approximation algorithms for {\sc max}-$3$-{\sc cut} and other
  problems via complex semidefinite programming.
\newblock {\em Journal of Computer and System Sciences}, 68:442--470, 2004.

\bibitem[Lin09]{ling2009approximation}
{Ai{-}fan} Ling.
\newblock Approximation algorithms for max 3-section using complex semidefinite
  programming relaxation.
\newblock In {\em Combinatorial Optimization and Applications}, pages 219--230.
  Springer, 2009.

\bibitem[MN11]{DBLP:conf/innovations/MakarychevN11}
Konstantin Makarychev and Alantha Newman.
\newblock Complex semidefinite programming revisited and the assembly of
  circular genomes.
\newblock In {\em Proceedings of Innovations in (Theoretical) Computer Science
  (ICS)}, pages 444--459, 2011.

\bibitem[ZH06]{zhang2006complex}
Shuzhong Zhang and Yongwei Huang.
\newblock Complex quadratic optimization and semidefinite programming.
\newblock {\em SIAM Journal on Optimization}, 16(3):871--890, 2006.

\bibitem[Zwi99]{Zwick}
Uri Zwick.
\newblock Outward rotations: a tool for rounding solutions of semidefinite
  programming relaxations, with applications to max cut and other problems.
\newblock In {\em Proceedings of the thirty-first annual ACM symposium on
  Theory of computing}, pages 679--687. ACM, 1999.

\end{thebibliography}

\end{document}